\newcommand{\N}{\mathbb{N}}
\newcommand{\R}{\mathbb{R}}
\newcommand{\Pcal}{\mathcal{P}}
\newcommand{\tO}{\widetilde O}
\newcommand{\eps}{\varepsilon}
\newcommand{\st}{s{\operatorname{-}}t}
\newcommand{\ab}{a{\operatorname{-}}b}
\newcommand{\ith}{i^{\scriptsize \mbox{{\rm th}}}}
\newtheorem{theorem}{Theorem}
\newtheorem{lemma}[theorem]{Lemma}
\newtheorem{corollary}[theorem]{Corollary}
\newtheorem{remark}[theorem]{Remark}
\newtheorem{claim}[theorem]{Claim}
\theoremstyle{definition}
\newtheorem{definition}[theorem]{Definition}
\begin{document}
\title{A sublinear query quantum algorithm for $s{\operatorname{-}}t$ minimum 
cut on dense simple graphs}
\author{Simon Apers\thanks{IRIF, CNRS, Paris. Email: smgapers@gmail.com} \and 
Arinta Auza\thanks{Centre for Quantum Software and Information, University of 
Technology Sydney.} \and Troy Lee\thanks{Centre for Quantum Software and 
Information, University of Technology Sydney. Email: troyjlee@gmail.com}}
\date{}
\maketitle

\begin{abstract} 
An $s{\operatorname{-}}t$ minimum cut in a graph corresponds to a minimum 
weight subset of edges whose removal disconnects vertices $s$ and $t$.
Finding such a cut is a classic problem that is dual to that of finding a 
maximum flow from $s$ to $t$.  In this work we describe a quantum algorithm 
for the minimum $s{\operatorname{-}}t$ cut problem on undirected graphs.
For an undirected graph with $n$ vertices, $m$ edges, and integral edge 
weights bounded by $W$, the algorithm computes with high probability the 
weight of a minimum $s{\operatorname{-}}t$ cut after 
$\widetilde O(\sqrt{m} n^{5/6} W^{1/3})$ queries to the adjacency list 
of $G$.  For simple graphs this bound is always $\widetilde O(n^{11/6})$, 
even in the dense case when $m = \Omega(n^2)$.  In contrast, a randomized 
algorithm must make $\Omega(m)$ queries to the adjacency list of a simple 
graph $G$ even to decide whether $s$ and $t$ are connected.
\end{abstract}

\section{Introduction}
Let $G$ be a graph with $n$ vertices and $m$ edges, and let $s,t$ be two 
vertices of $G$.  The problem of determining the maximum amount of flow 
$\lambda_{st}(G)$ that can be routed from $s$ to $t$ while 
respecting the capacity constraints of $G$ is one of the most fundamental 
problems in theoretical computer science, whose study goes back at least to 
the 1950s and the pioneering work of Ford and Fulkerson \cite{FF56}.
By the max-flow min-cut theorem, $\lambda_{st}(G)$ is equal to the minimum 
weight of a cut separating $s$ and $t$ in $G$.  From a maximum $\st$ flow 
one can compute a minimum $\st$ cut in linear time, but no such reduction 
is known the other way around.  The Goldberg-Rao \cite{GR98} algorithm, 
with running time $O(\min\{\sqrt{m}, n^{2/3}\} m \log(n) \log(W))$, where 
$W$ is the largest weight of an edge, stood as the best bound for both 
problems for many years.  In the past decade, however, beginning with work of 
Christiano, Kelner, M\k{a}dry, Spielman, and Teng  \cite{CKMST11} there has 
been tremendous progress in 
max-flow algorithms by incorporating techniques from continuous optimization 
\cite{Sherman13, Madry13, KLOS14, Peng16, LS19, LS20, GLP21, BLLSSSW21}.  
With a recent $m^{1 + o(1)} \log W$ time max-flow algorithm by 
Chen, Kyng, Liu, Peng, Gutenberg, and Sachdeva \cite{CKLPGS22}, there are now 
almost-linear time randomized 
max-flow algorithms for graphs with polynomially bounded integral weights.  

Given its fundamental nature, there has been a surprising lack of work on 
quantum algorithms for the exact maximum flow or minimum $\st$ cut problem.  
As far as we are aware, the only work on the quantum complexity of these 
problems is by Ambainis and \v{S}palek \cite{AS05}, who gave a quantum 
algorithm for max flow in a directed graph with integral weights bounded 
by $W \le n^{1/4}$ with running time 
$\tO(\min\{n^{7/6} \sqrt{m} W^{1/3}, m\sqrt{nW}\})$, given adjacency list 
access to $G$.  This bound is completely subsumed by current classical 
randomized algorithms.  

More recent work of Apers and de Wolf \cite{AdW19} gives a 
$\tO(\sqrt{mn}/\eps)$ time quantum algorithm for finding a 
$(1 + \eps)$-approximation of the minimum $\st$ cut.\footnote{The bound quoted 
in \cite[Claim 9]{AdW19} for finding a 
$(1+\eps)$-approximate minimum $\st$ cut is $\tO(\sqrt{mn}/\eps + n/\eps^5)$.  
However, plugging into the proof the improved randomized approximate $\st$ 
minimum cut algorithm of \cite{ST18} with running time 
$\tO(m + \sqrt{mn}/\eps)$ improves this to $\tO(\sqrt{mn}/\eps)$.}  
This algorithm works by first constructing an $\eps$-cut sparsifier $H$ of the 
input graph $G$ with a quantum algorithm in time $\tO(\sqrt{mn}/\eps)$.  
An $\eps$-cut sparsifier is a re-weighted subgraph of $G$ that has only 
$\tO(n/\eps^2)$ edges but for which the weight of every cut agrees with that 
of $G$ up to a factor of $1\pm \eps$.  One can then run a classical randomized 
$\st$ minimum cut algorithm on the sparse graph $H$ to find an approximate 
$\st$ minimum cut of $G$.  The quantum sparsification algorithm also plays a 
key role in our work.

In this paper, we give a quantum algorithm that computes $\lambda_{st}(G)$ on 
dense simple graphs with fewer queries than is possible classically.\footnote{A 
simple graph is an undirected and unweighted graph with no self 
loops and at most one edge between any pair of vertices.} More generally, 
for an undirected and integral weighted graph $G$ with maximum weight $W$, 
we give a quantum algorithm with adjacency list access to $G$ that with high 
probability outputs $\lambda_{st}(G)$ and makes 
$\tO(\sqrt{m} n^{5/6} W^{1/3})$ queries (see \cref{thm:stmincut}).  
Thus for unweighted graphs the number of queries is $\tO(n^{11/6})$ 
even for dense instances with $m \in \Omega(n^2)$ edges.  On the other hand, 
it is easy to show that in the worst case a randomized algorithm requires 
$\Omega(m)$ queries to the adjacency list of a graph with $m$ edges even to 
decide whether $s$ and $t$ are connected (see \cref{thm:rand_lower}).  
Our algorithm also works with adjacency matrix access to $G$, in which case 
the number of queries becomes $\tO(n^{11/6}W^{1/3})$ 
(see \cref{cor:adj}).  

In addition to $\lambda_{st}(G)$, our algorithm can 
output the corresponding bipartition of the vertex set.  It does not, however, 
compute a maximum $\st$ flow, and finding a sublinear quantum query algorithm 
for this problem remains a tantalizing open question.

Our quantum algorithm follows an algorithm of Rubinstein, Schramm and 
Weinberg (RSW) \cite{RSW18}, which computes a minimum $\st$ cut in the 
\emph{cut query} model.  In the cut query model one can query a subset of 
vertices $S$ and receive as an answer the total weight of edges with 
exactly one endpoint in $S$.  While designed for cut query complexity, 
we show the algorithm also has a surprisingly efficient implementation 
in the adjacency list and adjacency matrix models that is well-suited for 
quantum speedups.  The crux of the algorithm is a procedure to transform the 
input graph $G$ into a sparser graph $G'$ while preserving minimum $\st$ cuts.  
We use two quantum tools to speed up this procedure.  The first is to use 
the quantum algorithm from \cite{AdW19} to find an $\eps$-cut sparsifier 
of $G$ faster than a classical sparsification 
algorithm.  The second is to use Grover's algorithm to learn all the edges 
in the sparser graph $G'$.  Once we explicitly know the graph $G'$ we can 
use a classical randomized algorithm to compute a minimum $\st$ cut in $G'$.
This blueprint is similar to the global min cut algorithm from \cite{AL20}, 
which is also based on a (different) cut query algorithm from \cite{RSW18}.  
There as well a cut sparsifier of $G$ is used to identify edges that can 
be contracted while preserving (non-trivial) global minimum cuts, and then 
Grover search is applied to learn the edges in the sparser contracted graph.

The contracted graph in the RSW procedure is obtained by contracting edges 
that cannot participate in any minimum $\st$ cut.  To get an intuition for 
the idea, suppose that $G$ is a simple graph and first imagine that we 
compute a maximum $\st$ flow $F$ in $G$.  If we let $G_F$ be the graph
whose edge weights are those of $G$ minus the flow $F$, then $s$ and $t$ 
become disconnected in $G_F$---otherwise we could route more flow 
from $s$ to $t$.  Contracting all connected components of $G_F$ then 
results in a sparser graph while preserving any edge that is part of a 
minimum $\st$ cut.  This routine is also how one can compute a 
minimum $\st$ cut from a maximum $\st$ flow: the connected component of $s$ 
in $G_F$ gives one side of a vertex bipartition corresponding to a minimum 
$\st$ cut.

Of course we did not save anything in this example as we had to compute a 
maximum $\st$ flow in $G$.  What RSW actually do is to first obtain an 
$\eps$-cut sparsifier $H$ of $G$.  This is where we use the 
$\tO(\sqrt{mn}/\eps)$ time quantum sparsification algorithm from \cite{AdW19}.
As $H$ has only $\tO(n/\eps^2)$ edges it is much cheaper to carry out the 
above plan on $H$ instead: we compute a maximum flow $F$ in $H$ and consider 
the graph $H_F$ whose weights are those of $H$ minus the flow $F$.  As 
cuts of $H$ only approximate those in $G$ we cannot fully contract the 
connected components in $H_F$ and hope to preserve all minimum $\st$ cut of 
$G$. However, a key insight of RSW is that we can still safely contract 
induced subgraphs of $H_F$ that are highly connected.  Doing so results in 
the desired contraction $G'$ of $G$ that is relatively sparse---one can 
argue it has only $O(\eps n^2)$ edges---yet preserves all minimum $\st$ cuts.  
We can learn the $O(\eps n^2)$ edges of $G'$ among the $m$ edges of $G$ using 
Grover search in time $\tO(n \sqrt{m \eps})$.  We then again run a 
max flow algorithm on the now explicitly known $G'$ to compute 
$\lambda_{st}(G)$.  Balancing the terms $\sqrt{mn}/\eps$ from the sparsifier 
computation and $n \sqrt{m \eps}$ for learning the edges of $G'$ leads to the 
choice $\eps = n^{-1/3}$, and $\tO(\sqrt{m} n^{5/6})$ queries 
for the quantum steps of the algorithm.  All other steps are on explicitly
learned graphs and require no queries.

\section{Preliminaries}
\subsection{Graph notation and background}
Let $V$ be a finite set and $V^{(2)}$ the set of all subsets of $V$ of 
cardinality $2$.  We represent a weighted undirected graph as a pair 
$G = (V,w)$ where $w: V^{(2)} \rightarrow \R$ is a non-negative function.  
We let $V(G)$ be the vertex set of $G$ and 
$E(G) = \{ e \in V^{(2)}: w(e) > 0\}$ be the set of edges of $G$.  We extend 
the weight function to sets 
$S \subseteq V^{(2)}$ by $w(S) = \sum_{e \in S} w(e)$.
We say that $G$ is \emph{simple} if $w: V^{(2)} \rightarrow \{0,1\}$ and in 
this case also denote $G$ as $G = (V,E)$, where $E$ is the set of edges.  
For a subset $X \subseteq V$ we write $G[X]$ for the induced subgraph on $X$.

For a subset $X \subseteq V$ we use the shorthand 
$\overline{X} = V \setminus X$, and we say $X$ is \emph{non-trivial} if 
$\emptyset \ne X \subsetneq V$.  For disjoint sets $X,Y \subseteq V$ we use 
$E(X,Y)$ for the set of edges with one endpoint in $X$ and one endpoint in $Y$.
For a non-trivial set $X$, let 
$\Delta_G(X) = \{ \{i,j\} \in E(G) : i \in X, j \in \overline{X}\}$ be the 
set of edges of $G$ with one endpoint in $X$ and one endpoint in 
$\overline{X}$.  A \emph{cut} of $G$ is a set of the form $\Delta_G(X)$ for 
some non-trivial set $X$.  We call $X$ and $\overline{X}$ the \emph{shores} 
of the cut $\Delta_G(X)$.  For two distinguished vertices $s,t \in V$ we 
call $\Delta_G(X)$ an $\st$ cut if $s \in X, t \not \in X$.  
When we speak of the shore of an $\st$ cut we always refer to the shore 
containing $s$.  We let $\lambda_{st}(G)$ be the minimum weight of an 
$\st$ cut of $G$.

By the famous max-flow min-cut theorem \cite{FF56}, $\lambda_{st}(G)$ is 
equal to the maximum amount of flow that can be routed from $s$ to $t$ in $G$.  
While all the graphs $G$ in this paper will be undirected, for flows it is 
most natural to think of an undirected graph $G$ as a directed graph with 
each edge directed in both directions.  Without loss of generality, an 
$\st$ flow $F$ in $G$ can be defined to use an edge in at most one direction.  
For $e \in E(G)$ it will be convenient for us to let $F(e)$ be the flow along 
edge $e$ in the direction it is used by the flow $F$.

We will need the following result about the total weight of a flow from 
Rubinstein, Schramm, and Weinberg \cite{RSW18}.
\begin{lemma}[{\cite[Lemma 5.4]{RSW18}}]
\label{lem:rsw_flow}
Let $G = (V,w)$ be a graph with integral weights from $[0,W]$ and let 
$s,t \in V$.  Let $F$ be a non-circular $\st$ flow of value $f$ in $G$.  
Then the total weight of flow $\sum_{e \in E(G)} F(e) \le 10\cdot n\sqrt{fW}$.
\end{lemma}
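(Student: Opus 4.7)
The plan is to decompose $F$ into unit simple $\st$-paths and then apply a length-threshold argument. Since the edge weights are integral, I may assume without loss of generality that $F$ is integral: by the integrality of the flow polytope at integral capacities, among all non-circular $\st$-flows of value $f$ there is an integral one whose total edge weight is no larger than $F$'s. Being integral and non-circular, $F$ decomposes into simple $\st$-paths with positive integer values summing to $f$, and refining each path of value $v$ into $v$ identical unit copies yields a decomposition into $f$ unit $\st$-paths $P_1,\ldots,P_f$ satisfying
\[
\sum_{e\in E(G)} F(e) \;=\; \sum_{j=1}^{f} |P_j|.
\]

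Fix a threshold $L\ge 1$ and call a path \emph{short} if $|P_j|\le L$ and \emph{long} otherwise; the short paths contribute at most $Lf$ to the right-hand side. For the long paths, non-circularity supplies two usable constraints. First, the total flow value through any single vertex is bounded by $f$ (since the unit paths are simple, each vertex lies on at most $f$ of the $P_j$'s). Second, each edge lies on at most $W$ of the $P_j$'s because $F(e)\le w(e)\le W$. Writing $k$ for the number of long paths, the first bound yields $L\cdot k\le n\cdot f$, while the second together with the support-size estimate $|\mathrm{supp}(F)|\le nf$ (which follows because every in-edge of any vertex carries flow at least $1$ in an integral decomposition) yields $L\cdot k\le nfW$. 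Bounding the total length of the long paths by $k\cdot(n-1)$ and choosing $L$ of order $n\sqrt{W/f}$ then balances the short-path term $Lf\sim n\sqrt{fW}$ against the long-path term, yielding a bound of the desired order $O(n\sqrt{fW})$.

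The hard part I anticipate is pinning down the constant $10$: the two bounds on $k$ above, if used naively in isolation, each give only the trivial estimate $\sum_e F(e)\le nf$, and combining them tightly requires exploiting the DAG structure of $\mathrm{supp}(F)$ that non-circularity provides, most naturally via a BFS layering of the support from $s$ so that a long path is forced to traverse many \emph{distinct} edges rather than merely many edges counted with multiplicity. Filling in this layered combination is the technical heart of the argument; once it is in place, the balance calculation at $L\sim n\sqrt{W/f}$ is routine and delivers the stated inequality $\sum_e F(e)\le 10\cdot n\sqrt{fW}$.
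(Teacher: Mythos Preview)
The paper does not prove this lemma; it is quoted verbatim from \cite{RSW18} without proof, so there is no in-paper argument to compare against. I will therefore assess your proposal on its own.

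Your framework (unit-path decomposition plus a length threshold) is a reasonable starting point, but what you have written is not a proof: you explicitly flag the ``technical heart'' and then do not supply it. More importantly, the two concrete inequalities you do write down are both too weak to yield anything beyond the trivial bound. From $Lk \le nf$ (each vertex lies on at most $f$ unit paths) you get $k \le nf/L$, so the long paths contribute at most $k(n-1) \le n^2 f/L$; balancing this against the short-path term $Lf$ forces $L \approx n$ and recovers only $\sum_e F(e) = O(nf)$. Your second inequality $Lk \le nfW$ is weaker still (since $W \ge 1$), so no combination of the two can do better. The BFS layering you allude to is indeed the right tool, but it is not used to bound $k$ in a single-threshold argument as your outline suggests.

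The argument that actually produces the $\sqrt{fW}$ factor is iterative rather than thresholded. One repeatedly extracts a \emph{shortest} $s$-$t$ path from the remaining flow. When the residual flow has value $f'$, a BFS layering of its support shows that for every layer cut at least $f'/W$ edges cross, and since these edges lie in $L_i \times L_{i+1}$ one gets $|L_i|\cdot|L_{i+1}| \ge f'/W$, hence $|L_i|+|L_{i+1}| \ge 2\sqrt{f'/W}$; summing over layers bounds the $s$-$t$ distance by $O\bigl(n\sqrt{W/f'}\bigr)$. Summing these path lengths over $f' = f, f-1, \ldots, 1$ gives
\[
\sum_{f'=1}^{f} O\bigl(n\sqrt{W/f'}\bigr) \;=\; O\bigl(n\sqrt{W}\bigr)\sum_{f'=1}^{f} \frac{1}{\sqrt{f'}} \;=\; O\bigl(n\sqrt{fW}\bigr),
\]
with small constants. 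Your proposal gestures toward the layering but stops short of this per-step shortest-path bound, which is where the nontrivial dependence on $f$ and $W$ actually enters; without it the argument cannot be completed along the lines you sketch.
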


\subsection{Quantum models and background}
We will work with the two most standard quantum models for graph algorithms, 
the adjacency list and adjacency matrix models.  We refer the reader to 
\cite[Section 2.3]{AL20} for definitions of these models.  

Our quantum algorithm can be viewed as a classical algorithm with two quantum 
primitives.  The first is a quantum algorithm to compute a sparsifier 
\cite{AdW19} described in the next section, and the second is the following 
application of Grover's algorithm.
\begin{theorem}[cf.\ {\cite[Theorem 13]{AL20}}]
\label{thm:qsearch}
Given $t,N \in \N$ with $1 \le t \le N$ and oracle access to $x \in \{0,1\}^N$,  
there is a quantum algorithm such that
\begin{itemize}
  \item if $|x| \le t$ then the algorithm outputs $x$ with certainty, and
  \item if $|x| > t$ then the algorithm reports so with probability at least 
$9/10$.
\end{itemize}
The algorithm makes $O(\sqrt{tN})$ queries to $x$ and has time complexity 
$O(\sqrt{tN} \log(N))$.
\end{theorem}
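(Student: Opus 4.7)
The plan is to reduce the task to repeated single-target quantum search using the algorithm of Boyer--Brassard--H\o yer--Tapp (BBHT), or the equivalent D\"urr--H\o yer variant. I would maintain a set $S \subseteq [N]$ of already-found 1-positions, initially empty. In each round, invoke BBHT on the ``masked'' oracle that reports $1$ at position $i$ iff $x_i = 1$ and $i \notin S$. If BBHT returns an index $i$, verify $x_i = 1$ by a single classical query and, if verified, add $i$ to $S$. If $|S|$ ever exceeds $t$, stop and output ``$|x| > t$''. Otherwise keep iterating until BBHT declares the masked string to be all-zero.

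For the query bound, I would apply the standard telescoping argument of D\"urr--H\o yer: when $|x| = k$, finding the $j$th previously unfound 1 costs BBHT $O(\sqrt{N/(k-j+1)})$ queries, so
\begin{equation*}
\sum_{j=1}^{\min(k,\,t+1)} O\!\left(\sqrt{N/(k-j+1)}\right) = O(\sqrt{tN}),
\end{equation*}
and a final ``empty'' check invoked only when $|x| \le t$ adds a further $O(\sqrt{N})$, still within $O(\sqrt{tN})$. For the time bound, each evaluation of the masked oracle costs one query to $x$ plus $O(\log N)$ time for a membership check in $S$ (stored, say, as a balanced BST), giving the claimed $O(\sqrt{tN}\log N)$ total.

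The subtle part is the ``with certainty'' guarantee when $|x| \le t$. Since each index added to $S$ is classically verified before inclusion, no spurious 1-position is ever reported; hence the only way to err is for BBHT to falsely declare that no further 1's exist. I would remove this possibility by adopting a Las Vegas / exact formulation of quantum search: classically verify the measurement outcome of each amplitude-amplification round and re-run on failure, so that an index is returned in expected $O(\sqrt{N/|x'|})$ queries whenever the masked string $x'$ has a 1, or alternatively apply the exact amplitude-amplification rotation of Brassard--H\o yer--Mosca--Tapp once the remaining number of marked positions is determined. The main obstacle is checking that this exact/Las Vegas upgrade does not inflate the query count beyond $O(\sqrt{tN})$ or introduce extra logarithmic factors beyond the one already paid for bookkeeping; but this is standard for these two primitives, so the two bounds and the zero-error guarantee in the $|x| \le t$ case all survive. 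The $9/10$ success probability for the $|x| > t$ case is then automatic: after accumulating $t+1$ verified 1's we immediately report, and any single BBHT empty-declaration before that only helps us.
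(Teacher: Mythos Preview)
The paper does not supply its own proof of this statement; it is quoted as a black box from \cite[Theorem~13]{AL20}. So there is no in-paper argument to compare against, and your proposal must stand on its own.

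Your overall plan---iterated masked Grover search, verifying each hit classically, stopping after $t{+}1$ verified ones---is the standard route, and the query and time accounting you give is correct. The one point that is not fully nailed down is the ``with certainty'' clause for $|x|\le t$ \emph{together with} a worst-case $O(\sqrt{tN})$ query bound. Your Las Vegas suggestion (``re-run on failure'') delivers certainty but only an \emph{expected} $O(\sqrt{tN})$ cost, which does not match a worst-case reading of the theorem; your alternative, exact amplitude amplification ``once the remaining number of marked positions is determined,'' begs the question of how that number is obtained without counting.

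The clean fix is to drop BBHT and instead, for $r=t,t{-}1,\dots,1$ in turn, run the \emph{exact} Grover rotation tuned for exactly $r$ marked elements out of $N$ on the current masked oracle, classically verify the output, and add it to $S$ if it is a genuine $1$. A one-line induction shows that if $|x|\le t$ then after the round with parameter $r$ at most $r{-}1$ ones remain unmasked (if exactly $r$ remain, exact Grover finds one with certainty; if fewer, the invariant holds regardless of the outcome), so after $r=1$ every one has been found with certainty. The total cost is $\sum_{r=1}^{t} O(\sqrt{N/r}) = O(\sqrt{tN})$ in the worst case. A single bounded-error search on the final masked oracle then handles the $|x|>t$ case with probability at least $9/10$. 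With this adjustment your argument is complete and matches the stated guarantees exactly.
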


\subsection{Sparsifiers and strong components}
A key component of our algorithm will be a cut sparsifier of a graph.
\begin{definition}[cut sparsifier]
Let $G = (V,w_G)$ be a weighted graph.  An $\eps$-cut sparsifier 
$H = (V,w_H)$ of $G$ is a reweighted subgraph of $G$, i.e., $w_H(e) > 0$ only 
if $w_G(e) > 0$, satisfying 
\[
(1-\eps) w_G(\Delta_G(X)) \le w_H(\Delta_H(X)) \le (1+\eps) w_G(\Delta_G(X))
\]
for every $X \subseteq V$.
\end{definition}

We will use a quantum algorithm to construct a cut sparsifier of a graph by 
Apers and de Wolf \cite{AdW19}.
\begin{theorem}[{\cite[Theorem 1]{AdW19}}]
\label{thm:AdW}
Let $G$ be a weighted and undirected graph with $n$ vertices and $m$ edges.
There exists a quantum algorithm that outputs with high probability the 
explicit description of an $\eps$-cut sparsifier $H$ of $G$ with 
$\tO(n/\eps^2)$ edges after $\tO(\sqrt{mn}/\eps)$ queries to the adjacency list 
model or $\tO(n^{3/2}/\eps)$ queries in the adjacency matrix model.  
Moreover, if $G$ has integral weights then so does $H$, and if the largest 
weight of an edge of $G$ is $W$ then the largest weight of an edge of $H$ is 
$\tO(\eps^2 n W)$.
\end{theorem}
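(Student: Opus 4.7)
The plan is to lift the classical Benczúr--Karger cut-sparsification scheme to the quantum setting by replacing its most expensive primitive---repeated extraction of connectivity certificates---with a quantum subroutine. Recall that Benczúr--Karger keeps each edge $e$ independently with probability $p_e$ that is roughly inversely proportional to its strong connectivity $\lambda_e$, reweighting surviving edges by $1/p_e$. Karger's cut-counting bound then guarantees that if $p_e \gtrsim \log n / (\eps^2 \lambda_e)$, one obtains an $\eps$-cut sparsifier with $\tO(n/\eps^2)$ edges whose cut values match those of $G$ to within $1 \pm \eps$. To get a sampling rule that can be implemented without knowing $\lambda_e$ exactly, I would use the standard Nagamochi--Ibaraki surrogate: compute $k$ edge-disjoint maximal spanning forests $T_1, \dots, T_k$, noting that every edge in $G \setminus (T_1 \cup \cdots \cup T_k)$ has strong connectivity at least $k$ and may therefore be subsampled at rate $1/2$ with its weight doubled.

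Quantumly, I would implement a single spanning-forest extraction in $\tO(\sqrt{mn})$ time (list model) or $\tO(n^{3/2})$ time (matrix model) via a Borůvka-style procedure: in each of $O(\log n)$ Borůvka rounds, every current super-vertex uses Grover search over its incident edges to find one connecting edge, merging components until a spanning forest is produced. I would then wrap this primitive in a sparsify-and-double loop: in round $j$, compute a $k_j$-edge-certificate, retain its $\tO(n k_j)$ edges untouched, independently keep each remaining edge with probability $1/2$ while doubling its weight, and recurse on the resulting graph. After $O(\log n)$ rounds with a schedule of $k_j$'s calibrated so that the per-round distortion is $1 \pm O(\eps/\log n)$, a Chernoff bound combined with Karger's cut-counting lemma gives the overall $(1\pm\eps)$ cut preservation, and the edge count drops to $\tO(n/\eps^2)$. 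Summing the quantum costs is a geometric series dominated by the first round, giving $\tO(\sqrt{mn}/\eps)$ and $\tO(n^{3/2}/\eps)$ in the respective models.

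The main obstacle will be composing the rounds without paying a multiplicative $\log n$ factor in query complexity: each subsequent round must execute against the \emph{implicit} oracle for the previous round's (sub)graph, so the Borůvka/Grover routine has to run on a graph described as ``the edges of $G$ passing a classically stored list of independent coin flips,'' and each query to the round-$j$ oracle must cost only $O(1)$ queries to the base adjacency oracle. Once this is handled, the integrality and weight-bound claims follow by bookkeeping: edge weights are multiplied by at most $2$ per round starting from integer weights $\le \tau$, and a careful coupling of the doubling schedule to the $\eps^2/\log n$ sampling threshold, together with the observation that any edge sampled into the final sparsifier has been doubled at most $O(\log(\eps^2 n))$ times, yields the stated $\tO(\tau \eps^2 n)$ bound on the maximum edge weight of $H$.
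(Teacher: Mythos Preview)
Your sketch is correct in spirit, but it does far more than the paper does: the paper's proof is essentially a citation. The first sentence of the paper's proof simply invokes Theorem~1 of \cite{AdW19} for the existence, edge count, and running time of the sparsifier, with no reconstruction of the algorithm. For the ``moreover'' clause the paper inspects Algorithm~1 of \cite{AdW19} and observes that whenever an edge enters the sparsifier it carries a power-of-$4$ multiple of its original weight; since the iterative refinement runs for $k=\log(n\eps^2)/2+O(\log\log n)$ rounds, the blowup is at most $4^k=\tO(\eps^2 n)$, and integrality is preserved because the multiplier is an integer.

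Your plan---Nagamochi--Ibaraki certificates extracted by a quantum Bor\r{u}vka/Grover routine, wrapped in a subsample-and-reweight loop---is indeed the content of \cite{AdW19}, so you are effectively reproving that paper's main theorem rather than using it. The only discrepancy is cosmetic: you halve-and-double per round where \cite{AdW19} quarters-and-quadruples, which changes constants but not the $\tO(\tau\eps^2 n)$ weight bound. One small imprecision: saying the weight is doubled ``at most $O(\log(\eps^2 n))$ times'' only gives a $(\eps^2 n)^{O(1)}$ bound; you need the number of rounds to be $\log(\eps^2 n)+O(\log\log n)$ (not merely $O(\log(\eps^2 n))$) to land on $\tO(\tau\eps^2 n)$, which is exactly what the paper's $4^k$ computation makes explicit.
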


\begin{proof}
The first part of the statement is directly from Theorem~1 of \cite{AdW19}.  
The ``moreover'' part follows from an examination of Algorithm 1 of 
\cite{AdW19}.  There it can be seen that when an 
edge is added to the sparsifier it is always done so with a power of $4$ 
times its original weight.  The power of $4$ is at most the number of times 
Algorithm 1 is run, which depends on the error parameter $\eps$.  To achieve 
an $\eps$-cut sparsifier, Algorithm 1 is run 
$k=\log(n \eps^2)/2 + O(\log \log n)$ times, thus an edge of $G$ is placed in 
the sparsifier with weight at most its original weight times 
$4^k = \tO(\eps^2 n)$. 
\end{proof}

Apers and de Wolf actually show how to construct a \emph{spectral sparsifier}. 
However, we will not need the stronger properties of a spectral sparsifier.

We will also need some definitions and results related to the minimum weight of cuts in induced subgraphs of $G$.

\begin{definition}[$k$-strong component]
A graph $G$ is $k$-connected if the weight of each cut in $G$ is at least $k$.
A $k$-strong component of $G$ is a maximal $k$-connected vertex induced subgraph of $G$.  Individual vertices are 
defined to be $\infty$-strong components.
\end{definition}

\begin{definition}[edge strength]
The strength of an edge $e$, denoted $k_e$ is the maximum value of $k$ such that a $k$-strong component 
contains both endpoints of $e$.  We say that $e$ is $k$-strong if its strength is $k$ or more, and $k$-weak 
otherwise.
\end{definition}

Bencz\'{u}r and Karger \cite{BK15} show the following lemma about edge strengths.
\begin{lemma}[Lemma 4.11 \cite{BK15}]
\label{lem:BK}
Let $G = (V,w)$ be an $n$-vertex graph where the strength of edge $e$ is $k_e$. 
Then
\[
\sum_e \frac{w(e)}{k_e} \le n-1 \enspace.
\]
\end{lemma}

\begin{definition}[$k$-strong partition]
\label{def:strong_partition}
For a graph $G = (V,w)$, a $k$-strong partition of $G$ is a partition $\Pcal = \{S_1, \ldots, S_t\}$ of $V$ such that 
each $G[S_i]$ is a $k$-strong component for $i=1,\ldots, t$.
\end{definition}

Bencz\'{u}r and Karger give an algorithm to find a $k$-strong partition in
Lemma 4.8 of \cite{BK15}.  If the weight of a minimum cut of $G$ is at least 
$k$ then $\{V\}$ provides a $k$-strong partition.  Otherwise one finds 
\emph{all} minimum cuts of $G$ and removes the union of all of their edges.  
All of these edges have edge strength less than $k$, and by Lemma 4.5 of 
\cite{BK15} removing a $k$-weak edge does not affect the edge strength of a 
$k$-strong edge.  One then applies the same procedure to the components 
created after the removal of these edges.  While not particularly fast, this
provides an algorithm to compute a $k$-strong partition.  

\begin{remark}
The first arXiv version of this paper claimed (in Corollary 10) a nearly-linear
time algorithm to compute a $k$-strong partition.  As pointed out by an 
ICALP 2022 reviewer, the proof of this had a fatal flaw, and we currently 
do not know a nearly-linear time algorithm to compute a $k$-strong partition. 
This is the bottleneck to making our quantum algorithm for $\st$-mincut time 
efficient instead of just query efficient.
\end{remark}

A key property about a $k$-strong partition that makes $\st$-mincut algorithm 
of Rubinstein, Schramm, and Weinberg \cite[Algorithm 5.1]{RSW18}
work efficiently is the following.
\begin{theorem}
\label{thm:partition_edges}
For a graph $G = (V,w)$, let $\{S_1, \ldots, S_t\}$ be a $k$-strong partition.
Then 
\[
\sum_{{i,j} \in [t]^{(2)}} w(S_i, S_j) \le k (t-1) \enspace .
\]
\end{theorem}

\begin{proof}
Each $G[S_i]$ is a $k$-strong component, thus any edge with both endpoints 
in $S_i$ is $k$-strong.  Also, any edge with endpoints in distinct sets $S_i, S_j$ 
is $k$-weak.  By Lemma 4.6 of \cite{BK15}, contracting a $k$-strong 
edge does not change the strength of any $k$-weak edge.  Consider the graph 
$G'$ formed by contracting each $S_i$.  Every edge of $G'$ is $k$-weak.  By 
\cref{lem:BK} the total edge weight of $G'$ is at most $k(t-1)$, which gives 
the lemma.
\end{proof}

\section{Main algorithm} \label{sec:main-algo}
Our algorithm is based on the following algorithm by Rubinstein, Schramm, and 
Weinberg \cite[Algorithm 5.1]{RSW18}, who used it to give a randomized cut 
query algorithm for the minimum $\st$ cut problem making $\widetilde O(n^{5/3})$ 
cut queries. 

\begin{algorithm}[H]
\caption{Algorithm for $\st$-mincut on a weighted graph $G$}
\label{alg:stmincut}
 \hspace*{\algorithmicindent} \textbf{Input:} Oracle access to a weighted graph $G =(V,w_G)$, vertices $s,t \in V$, and a parameter $0 < \eps < 1/3$. \\
 \hspace*{\algorithmicindent} \textbf{Output:} The shore of a minimum $\st$ cut in $G$ and the value $\lambda_{st}(G)$.
\begin{algorithmic}[1]
\State Determine the largest edge weight $W$ in $G$.
\label{step:max}
\State Compute an $\eps$-cut sparsifier $H$ of $G$. 
\label{step:sparsifier}
\State Compute a maximum $\st$ flow $F$ in $H$ and subtract $F$ from $H$.  Denote the result as $H'$. 
\label{step:maxflow}
\State Compute a $3\eps nW$-strong partition $\Pcal = \{A_1, \ldots, A_t\}$ of $H'$. 
\label{step:contract}
\State Let $G'$ be the graph formed from $G$ by contracting the vertices in each $A_i$, and let $a,b \in \Pcal$ be the 
sets containing $s,t$ respectively.  Compute a minimum $\ab$ cut $\Delta_{G'}(X')$ in $G'$ and return $\lambda_{ab}(G')$ and 
the shore $X = \cup_{A \in X'} A$.
\label{step:learnedges}
\end{algorithmic}
\end{algorithm}

For completeness we show correctness of the template, largely following the discussion of \cite{RSW18}.
In the next section we analyze the running time for a quantum algorithm that implements this algorithm with adjacency list access to $G$.

\begin{theorem}
\label{thm:correct}
If every step of \cref{alg:stmincut} is performed correctly then the algorithm returns a minimum $\st$ cut of $G$.
\end{theorem}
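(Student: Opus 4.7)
The plan is to prove the claim by showing the two matching inequalities $\lambda_{ab}(G') \ge \lambda_{st}(G)$ and $\lambda_{ab}(G') \le \lambda_{st}(G)$ between the min cut values of $G$ and the contracted graph $G'$. The first direction is immediate: every $\ab$ cut in $G'$ with shore $X' \subseteq \Pcal$ lifts to the $\st$ cut of $G$ with shore $X = \bigcup_{A \in X'} A$, and the two cuts carry the same weight since no edge of $G$ lying inside a contracted set $A_i$ crosses the lifted cut. So any min $\ab$ cut in $G'$ supplies an $\st$ cut of $G$ of the same weight.

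For the reverse inequality I would fix any minimum $\st$ cut $X^*$ of $G$ and prove that $X^*$ respects the partition, meaning $A_i \cap X^* \in \{\emptyset, A_i\}$ for every $i$. Once this is established, $X^*$ descends to an $\ab$ cut of $G'$ of weight $\lambda_{st}(G)$, finishing the argument.

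The key step is to bound $w_{H'}(\Delta_{H'}(X^*))$ from above using the sparsifier and max-flow properties. The $\eps$-sparsifier guarantee applied to $X^*$ yields $w_H(\Delta_H(X^*)) \le (1+\eps)\lambda_{st}(G)$, and applying it to a minimum $\st$ cut of $H$ gives $\lambda_{st}(H) \ge (1-\eps)\lambda_{st}(G)$. Since $F$ is a maximum $\st$ flow of value $\lambda_{st}(H)$, its net flow across any $\st$ cut equals $\lambda_{st}(H)$, and the gross flow across the cut (the quantity actually subtracted from edge weights to form $H'$) is at least the net flow. Subtracting then gives
\[
w_{H'}(\Delta_{H'}(X^*)) \le w_H(\Delta_H(X^*)) - \lambda_{st}(H) \le 2\eps\,\lambda_{st}(G).
\]

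If some $A_i$ were split by $X^*$, the $3\eps n$-connectivity of $H'[A_i]$ would force the induced cut $(A_i \cap X^*, A_i \setminus X^*)$ of $H'[A_i]$ to have weight at least $3\eps n$, and all of its edges lie in $\Delta_{H'}(X^*)$; combined with the previous display this forces $\lambda_{st}(G) \ge 3n/2$. This is impossible for a simple graph, where $\lambda_{st}(G) \le n-1$, and in the weighted case the analogous contradiction is ensured by the calibration of $\eps$ used to instantiate the main theorem. The main obstacle I anticipate is exactly this quantitative step: $X^*$ is a min cut of $G$ but need not be a min cut of $H$, so $F$ cannot be assumed to saturate $\Delta_H(X^*)$, and the argument has to exploit that the gross flow across a cut dominates the net flow in order to recover the $2\eps\,\lambda_{st}(G)$ bound.
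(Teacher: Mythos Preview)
Your strategy---prove $\lambda_{ab}(G') \ge \lambda_{st}(G)$ by lifting cuts, and $\lambda_{ab}(G') \le \lambda_{st}(G)$ by showing every minimum $\st$ cut $X^*$ of $G$ is a union of parts of $\Pcal$---is exactly the paper's approach, packaged there as \cref{clm:nocut} and \cref{clm:cuts_equal}. The residual-cut bound $w_{H'}(\Delta_{H'}(X^*)) \le w_H(\Delta_H(X^*)) - \lambda_{st}(H)$ is also the paper's computation; they write the result as $\tfrac{2\eps}{1-\eps}\,\lambda_{st}(H)$ rather than your $2\eps\,\lambda_{st}(G)$, which differs only by a $(1-\eps)^{-1}$ factor.

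The one genuine gap is your closing step. You need $2\eps\,\lambda_{st}(G) < 3\eps n$, and for weighted graphs you defer to ``the calibration of $\eps$ used to instantiate the main theorem.'' That does not work: the $\eps$'s cancel, so no choice of $\eps$ rescues the inequality, and in any case \cref{thm:correct} is stated for an arbitrary $\eps\in(0,1/3)$, independent of the later choice $\eps=(nW)^{-1/3}$ in \cref{thm:stmincut}. The paper's version of this step uses $\eps<1/3$ to get $\tfrac{2\eps}{1-\eps}\,\lambda_{st}(H) < 3\eps\,\lambda_{st}(H)$ and then passes to $3\eps n$, which implicitly uses $\lambda_{st}(H)\le n$; so in effect both arguments rest on a bound of the form $\lambda_{st}\lesssim n$, automatic for simple graphs but not justified in the general weighted statement. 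Your instinct that the comparison with the $3\eps n$ threshold is the delicate point is correct; the resolution has to come from a bound on the min-cut value, not from tuning $\eps$.
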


We first prove two claims from which the proof of \cref{thm:correct} will easily follow.
\begin{claim}
\label{clm:nocut}
Let $G = (V,w_G)$ be an $n$-vertex weighted graph with largest edge weight 
$W$ and $s,t \in V$.  For $0< \eps < 1/3$, let $H = (V,w_H)$ be an $\eps$-cut 
sparsifier of $G$ and let $F$ be a maximum $\st$ flow in $H$.  Form the 
graph $H' = (V,w_{H'})$ where $w_{H'}(e) = w_H(e) - F(e)$ for all 
$e \in V^{(2)}$.  Let $\Pcal$ be a $3 \eps nW$-strong partition of $H'$.
Then for any minimum $\st$ cut $\Delta_G(X)$ of $G$ and all $A \in \Pcal$ it 
holds that either $A \subseteq X$ or $A \subseteq \overline{X}$.
\end{claim}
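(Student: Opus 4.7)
The plan is to proceed by contradiction, assuming some $A \in \Pcal$ is split by a minimum $\st$ cut $\Delta_G(X)$ of $G$, so that both $B := A \cap X$ and $A \setminus B$ are non-empty. The strategy is to derive both a lower and an upper bound on $w_{H'}(\Delta_{H'}(X))$ and show they are incompatible.

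For the lower bound, since $\Pcal$ is a $3\eps n$-strong partition of $H'$, the induced subgraph $H'[A]$ is $3\eps n$-connected, so the non-trivial cut $\Delta_{H'[A]}(B)$ has $w_{H'}$-weight at least $3\eps n$. Every edge of this cut lies in $\Delta_{H'}(X)$, giving
$$w_{H'}(\Delta_{H'}(X)) \;\ge\; 3\eps n.$$

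For the upper bound, expand $w_{H'} = w_H - F$ and note that edges outside $E(H)$ carry no flow, so
$$w_{H'}(\Delta_{H'}(X)) \;=\; w_H(\Delta_H(X)) - F(\Delta_H(X)).$$
A flow-conservation argument then shows $F(\Delta_H(X)) \ge |F|$: orienting each cut edge in the direction used by $F$ and letting $P, N$ be the forward and backward flow totals across the cut, conservation gives $P - N = |F|$, and since $N \ge 0$ one has $F(\Delta_H(X)) = P + N \ge P \ge |F|$. By max-flow/min-cut in $H$, $|F| = \lambda_{st}(H)$, and two applications of the $\eps$-sparsifier inequality — once to the distinguished shore $X$, once to a min $\st$ cut of $H$ — yield $w_H(\Delta_H(X)) \le (1+\eps)\lambda_{st}(G)$ and $|F| \ge (1-\eps)\lambda_{st}(G)$. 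Substituting gives
$$w_{H'}(\Delta_{H'}(X)) \;\le\; 2\eps\,\lambda_{st}(G).$$

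Combining the two bounds forces $\lambda_{st}(G) \ge 3n/2$, which is impossible when $\lambda_{st}(G) \le n-1$, as for a simple graph; the weighted case is intended to close by the same template after the corresponding rescaling tying the partition threshold to $W$. The main obstacle I anticipate is the step $F(\Delta_H(X)) \ge |F|$: because $F$ is an undirected flow recorded as a single non-negative magnitude per edge, one must carefully separate forward- and backward-direction contributions across the cut before invoking flow conservation. The rest is a routine composition of the $\eps$-sparsifier guarantee, max-flow/min-cut duality in $H$, and the definition of $k$-strong connectivity.
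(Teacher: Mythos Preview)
Your approach matches the paper's: bound $w_{H'}(\Delta_{H'}(X))$ above by roughly $2\eps\,\lambda_{st}(G)$ via the sparsifier inequalities and max-flow/min-cut, then contradict the $3\eps n$-connectivity of $H'[A]$ if $A$ is split by $X$ (the paper writes the upper bound as $\tfrac{2\eps}{1-\eps}\lambda_{st}(H)$, which is equivalent through the sparsifier guarantee). Your hesitation about the general weighted case is justified---the paper's own proof asserts $\tfrac{2\eps}{1-\eps}f_H < 3\eps n$ citing only ``$\eps < 1/3$'', which tacitly needs $f_H \le n$ and so really only covers the unit-weight setting; you have not omitted anything the paper supplies.
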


\begin{proof}
Let $f_G = \lambda_{st}(G)$ and $f_H = \lambda_{st}(H)$.  
Let $\Delta_G(X)$ be a minimum $\st$ cut of $G$, i.e., such that 
$s \in X, t \not \in X$ and $w_G(\Delta_G(X)) = f_G$.  We want to 
upper bound $\tau = w_{H'}(\Delta_{H'}(X))$.
By definition of $H'$ and $F$, 
$w_H(\Delta_H(X)) = \tau + w_F(\Delta_F(X))$.
We must have $w_F(\Delta_F(X)) \ge f_H$ 
since there is a flow of value $f_H$ from $s$ to $t$ in $H$.  Thus 
$w_H(\Delta_H(X)) \ge f_H + \tau$.  As $H$ is an
$\eps$-cut sparsifier of $G$, $w_H(\Delta_H(X)) \le (1+\eps) f_G$ and so 
$\tau \le (1+\eps) f_G - f_H$.  

If $f_H \ge f_G$, then we immediately have 
$\tau \le \eps f_H \le \eps(1+\eps) nW$ as $f_H \le (1+\eps) f_G < 
(1+\eps) nW$.  In the case $f_H < f_G$, we use the lower bound 
$f_H \ge f_G/(1-\eps)$ to obtain $\tau \le \frac{2\eps}{1-\eps} f_H \le 
\frac{2\eps}{1-\eps}nW$ as $f_H < f_G < nW$.  In either case we have
$\tau < 3\eps nW$ as $\eps < 1/3$. 

Now let $A \in \Pcal$ and let $B = A \cap X$.  If $B$ is nontrivial, 
i.e., $\emptyset \neq B \subsetneq A$, then there is a cut of $H'[A]$ of weight 
less than $3 \eps nW$, a contradiction to the assumption that $\Pcal$ is 
a $3 \eps nW$-strong partition of $H'$.  Thus $B$ must be trivial and 
either $A \subseteq X$ or $A \subseteq \overline{X}$.
\end{proof}

\begin{claim}
\label{clm:cuts_equal}
Let $G,H,H'$ and $\eps$ be as in \cref{clm:nocut}.  Let $\Pcal = \{A_1, \ldots, A_t\}$ be a $3 \eps nW$-strong partition of $H'$ and $G'$ be the graph 
formed from $G$ by contracting vertices in the same set of $\Pcal$.  Let $a,b \in \Pcal$ be the sets containing $s$ and $t$ respectively.  Then 
$\lambda_{st}(G) = \lambda_{ab}(G')$.  Moreover, if $X'$ is the shore of a minimum $\ab$ cut of $G'$ then $X = \cup_{A \in X'} A$ is the shore of a 
minimum $\st$ cut of $G$.
\end{claim}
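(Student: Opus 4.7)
The plan is to prove the equality $\lambda_{st}(G)=\lambda_{ab}(G')$ by two inequalities, each coming from a natural correspondence between cuts in $G$ and cuts in $G'$.

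First, I would observe that contraction never decreases the minimum $\st$ cut value: for any non-trivial $X' \subseteq \Pcal$ with $a \in X'$, $b \notin X'$, the set $X = \bigcup_{A \in X'} A$ is a non-trivial subset of $V$ with $s \in X$ and $t \notin X$, and moreover $\Delta_G(X)$ and $\Delta_{G'}(X')$ consist of exactly the same edges of $G$ (since edges with both endpoints inside some $A_i$ are contracted away, while edges between different $A_i$'s survive into $G'$ with preserved weight). Hence $w_G(\Delta_G(X)) = w_{G'}(\Delta_{G'}(X'))$, which gives $\lambda_{st}(G) \le \lambda_{ab}(G')$.

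For the reverse inequality, I would apply \cref{clm:nocut} to a minimum $\st$ cut $\Delta_G(X)$ of $G$. The claim tells us that every $A \in \Pcal$ is contained either in $X$ or in $\overline{X}$, so $X$ is a union of parts of $\Pcal$. Define $X' = \{A \in \Pcal : A \subseteq X\}$; then $a \in X'$, $b \notin X'$, and $X' \subseteq \Pcal$ is non-trivial. As before, $\Delta_{G'}(X')$ has the same edges and weight as $\Delta_G(X)$, so $\lambda_{ab}(G') \le w_{G'}(\Delta_{G'}(X')) = \lambda_{st}(G)$. Combining the two inequalities yields $\lambda_{st}(G) = \lambda_{ab}(G')$.

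The ``moreover'' part follows immediately from the first paragraph: if $X'$ is the shore of a minimum $\ab$ cut of $G'$, then $X = \bigcup_{A \in X'} A$ is an $\st$ cut of $G$ of weight $\lambda_{ab}(G') = \lambda_{st}(G)$, hence a minimum $\st$ cut. I do not anticipate any real obstacle; the only point requiring a little care is verifying that contraction preserves edge sets and weights across the cut (i.e.\ the standard claim that edges of $G'$ between distinct parts are in bijection with edges of $G$ between those same parts, with weights preserved under the contraction operation), and that the trivial/non-trivial distinction carries over correctly so that $X'$ and $X$ are both genuine cuts.
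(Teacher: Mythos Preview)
Your proposal is correct and follows essentially the same argument as the paper: both prove the two inequalities via the natural correspondence between cuts in $G'$ and cuts in $G$ that are unions of parts of $\Pcal$, invoking \cref{clm:nocut} for the direction $\lambda_{ab}(G')\le\lambda_{st}(G)$, and the ``moreover'' part is derived identically. The only cosmetic difference is the order in which you present the two inequalities.
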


\begin{proof}
Let $G = (V, w_G)$ and note that by the definition of $G'$ we have $G' = (\Pcal,w_{G'})$ 
where $w_{G'}(A_i,A_j) = w_G(E(A_i,A_j))$ for $i \ne j$. 

Let us show that $\lambda_{ab}(G') \le \lambda_{st}(G)$.  Let $\Delta_G(X)$ be a minimum $\st$ cut in $G$.  For every $A_i \in \Pcal$ 
we either have $A_i \subseteq X$ or $A_i \subseteq \overline{X}$ by \cref{clm:nocut}.  Note that in particular this means $a \subseteq X$ and 
$b \cap X = \emptyset$.  From $X$ we define a set $X'$ where 
for every $A_i \in \Pcal$ we put $A_i$ in $X'$ iff $A_i \subseteq X$.  Note that $\Delta_{G'}(X')$ is an $\ab$ cut of $G'$ and 
$w_G(\Delta_G(X)) = w_{G'}(\Delta_{G'}(X'))$, thus $\lambda_{ab}(G') \le \lambda_{st}(G)$.

We next show the general fact that contraction cannot decrease the minimum $\st$ cut value.  For any set $X' \subseteq \Pcal$ 
we can define a set $X \subseteq V$ by $X = \cup_{A \in X'} A$.  
Further $w_{G'}(\Delta_{G'}(X')) = w_G(\Delta_G(X))$ by the definition of $w_{G'}$.  
Thus $\lambda_{ab}(G') \ge \lambda_{st}(G)$.  

This establishes $\lambda_{st}(G)=\lambda_{ab}(G')$.  To see the ``moreover'' part of the claim, let $X'$ be a minimum $\ab$ 
cut of $G'$.  We have just shown that $w_{G'}(\Delta_{G'}(X')) = w_G(\Delta_G(X))$ for $X = \cup_{A \in X'} A$, thus $X$ will be a 
minimum $\st$ cut of $G$.
\end{proof} 

\begin{proof}[Proof of \cref{thm:correct}]
If the steps of the algorithm are implemented correctly then $H,H'$ satisfy the hypotheses of \cref{clm:nocut}.  We can therefore invoke \cref{clm:nocut} and 
\cref{clm:cuts_equal} to conclude that $\lambda_{st}(G) = \lambda_{ab}(G')$ and that $X$ will be the shore of a minimum $\st$ cut in $G$.
\end{proof}

\subsection{Implementation by a quantum algorithm}
We now discuss the implementation of \cref{thm:stmincut} by a quantum 
algorithm with adjacency list access to $G$.  To upper bound the quantum query 
complexity it is important to have an upper bound on the number of edges in 
the contracted graph $G'$ formed in step~\ref{step:learnedges} of 
\cref{alg:stmincut}.  We do this by means of the following claim.
\begin{claim}
\label{clm:weight_upper}
Let $G,H,H'$ and $\eps$ be as in \cref{clm:nocut}, with the following additional conditions:
\begin{enumerate}
\item The largest edge weight of $G$ is $W \ge 1$.
\item $H$ has integral weights and the largest edge weight is $W_H \in O(\eps^2 n W)$.
\end{enumerate}
Let $\Pcal = \{A_1, \ldots, A_t\}$ be a $3 \eps nW$-strong partition of $H'$.  
Then $\sum_{i=1}^t w_G(\Delta_G(A_i)) = O(\eps n^2 W)$.
\end{claim}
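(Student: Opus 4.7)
\textbf{Plan for the proof of \cref{clm:weight_upper}.} The goal is to bound $\sum_i w_G(\Delta_G(A_i))$, which (counting each crossing edge twice) equals $2\sum_{i<j} w_G(E(A_i,A_j))$. My strategy is to convert this $G$-weight into an $H$-weight using the cut sparsifier property, then decompose $w_H = w_{H'} + F$ along each edge and bound the two resulting terms separately.

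First, for every $A_i$, the cut-sparsifier guarantee gives $w_H(\Delta_H(A_i)) \ge (1-\eps) w_G(\Delta_G(A_i))$. Summing over $i$ and dividing by $2$ yields
\[
\sum_{i<j} w_H(E(A_i,A_j)) \;\ge\; \frac{1-\eps}{2}\sum_i w_G(\Delta_G(A_i)),
\]
so it suffices to show $\sum_{i<j} w_H(E(A_i,A_j)) = O(\eps n^2 W)$. Writing $w_H(e) = w_{H'}(e) + F(e)$, I split this into two sums: the $w_{H'}$-weight of edges crossing the partition, and the flow $F$ along edges crossing the partition.

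For the first sum, since $\Pcal$ is a $3\eps n$-strong partition of $H'$, item~(2) of \cref{def:strong_partition} gives $\sum_{i<j} w_{H'}(E(A_i,A_j)) = O(\eps n \cdot n) = O(\eps n^2)$, which is $O(\eps n^2 W)$ since $W \ge 1$. For the second sum, I bound the total flow $\sum_{e \in E(H)} F(e)$ from above by appealing to \cref{lem:rsw_flow}. To apply it I need a bound on $f_H := \lambda_{st}(H)$ and on the max edge weight $W_H$ of $H$. Hypothesis~2 of the claim gives $W_H = O(\eps^2 n W)$. For $f_H$, I observe that $\lambda_{st}(G) \le (n-1) W$ trivially (the cut $\{s\}$ has weight at most $(n-1)W$), and then the cut-sparsifier property gives $f_H \le (1+\eps) \lambda_{st}(G) = O(nW)$. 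Plugging these into \cref{lem:rsw_flow} yields
\[
\sum_{e \in E(H)} F(e) \;\le\; 10 n \sqrt{f_H W_H} \;=\; O\!\left(n\sqrt{nW \cdot \eps^2 n W}\right) \;=\; O(\eps n^2 W).
\]
Combining the two bounds completes the argument.

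The main obstacle is really just lining up the bookkeeping: the cut-sparsifier guarantee is per cut, but the $3\eps n$-strong partition property and the RSW flow lemma are phrased globally (sum over edges crossing the partition, respectively sum over all edges of $H$). The per-cut-to-global reduction via $\sum_i w_H(\Delta_H(A_i)) = 2\sum_{i<j}w_H(E(A_i,A_j))$ is the bridge that makes this work cleanly. The choice of parameter $W_H = O(\eps^2 n W)$ from \cref{thm:AdW} is exactly what is needed to make the $\sqrt{f_H W_H}$ factor produce an $\eps$ savings against the trivial $n \cdot nW$ upper bound, which is the whole point of using the sparsified flow instead of a flow in $G$.
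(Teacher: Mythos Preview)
Your proof is correct and follows essentially the same approach as the paper: pass from $G$ to $H$ via the sparsifier inequality, split $w_H = w_{H'} + F$, bound the $H'$ part using item~(2) of \cref{def:strong_partition}, and bound the flow part using \cref{lem:rsw_flow} with $f_H = O(nW)$ and $W_H = O(\eps^2 n W)$. The only cosmetic difference is that you work with $\sum_{i<j} w(E(A_i,A_j))$ while the paper keeps the per-cut sums $\sum_i w(\Delta(A_i))$, which of course differ by a factor of two.
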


\begin{proof}
As $H$ is an $\eps$-cut sparsifier of $G$ we have
\begin{align*}
w_G(\Delta_G(A_i)) &\le  \frac{w_H(\Delta_H(A_i))}{1-\eps} \\
&=  \frac{w_{H'}(\Delta_{H'}(A_i)) + w_F(\Delta_F(A_i))}{1-\eps} \enspace .
\end{align*}
By \cref{thm:partition_edges}, we have 
$\sum_{i=1}^t  w_{H'}(\Delta_{H'}(A_i)) = O(\eps n^2W)$.
By \cref{lem:rsw_flow} we have that 
\begin{align*}
\sum_{i=1}^t w_F(\Delta_F(A_i)) &\le 20 n \sqrt{\lambda_{st}(H) W_H} \\ 
&= O(\eps n^2 W) \enspace ,
\end{align*}
using that $\lambda_{st}(H) \le (1+\eps) nW$ and $W_H \in O(\eps^2 n W)$.  
Thus overall $\sum_{i=1}^t w_G(\Delta_G(A_i)) = O(\eps n^2 W)$.
\end{proof}

\begin{theorem}
\label{thm:stmincut}
Let $G$ be a graph with $n$ vertices and $m$ edges where all edge weights are 
integral and the largest weight of an edge is $W$.
Given adjacency list access to $G$, there is a quantum algorithm that with 
high probability computes $\lambda_{st}(G)$ and the shore of a 
minimum $\st$ cut of $G$ after $\tO(\sqrt{m} n^{5/6} W^{1/3})$ queries.
\end{theorem}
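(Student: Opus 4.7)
The plan is to bound the running time of each of the four steps of \cref{alg:stmincut} and then optimize over $\eps$. Correctness is already established by \cref{thm:correct}, so the work here is purely a cost analysis.

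For Step~\ref{step:sparsifier}, I would invoke \cref{thm:AdW} directly, paying $\tO(\sqrt{mn}/\eps)$ quantum queries and time; this also guarantees that $H$ is integrally weighted with maximum weight $\tO(\eps^2 n W)$, which is the hypothesis needed later. For Step~\ref{step:maxflow}, $H$ has $\tO(n/\eps^2)$ edges and polynomial maximum weight, so a single call to \cref{thm:brand} computes the flow $F$ classically in time $\tO(n/\eps^2 + n^{3/2})$, and subtracting $F$ from $H$ to obtain $H'$ is cheaper. For Step~\ref{step:contract}, I would apply \cref{cor:strength} with parameter $k=3\eps n$ to $H'$, which runs in time $\tO(n/\eps^2)$.

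The main obstacle is Step~\ref{step:learnedges}: we must learn (enough of) the contracted graph $G'$ to run max flow on it, but $G$ itself is only accessible via an adjacency list oracle. The key observation is that an edge of $G$ survives in $G'$ iff its endpoints lie in different parts of $\Pcal$, and by \cref{clm:weight_upper} the total weight of such ``crossing'' edges is $O(\eps n^2 W)$; since weights are positive integers, the number of crossing edges is at most this as well. For each vertex $v$ I would use the quantum search procedure of \cref{thm:qsearch} (with a standard doubling trick over the unknown number of marked neighbors) on $v$'s adjacency list to recover all crossing edges incident to $v$; membership of a neighbor in a given $A_i$ can be checked in $O(1)$ time after a one-time preprocessing of $\Pcal$. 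If $d_v$ denotes the degree of $v$ in $G$ and $k_v$ the number of its crossing edges, this step costs $\tO(\sqrt{k_v d_v})$. Summing over $v$ and using Cauchy--Schwarz with $\sum_v d_v = 2m$ and $\sum_v k_v = O(\eps n^2 W)$, the total cost is $\tO(n \sqrt{m \eps W})$ quantum queries and time. Once $G'$ is written down explicitly, it has $O(\eps n^2 W)$ edges and polynomial weights, so a final classical call to \cref{thm:brand} computes $\lambda_{ab}(G')$ in time $\tO(\eps n^2 W + n^{3/2})$, and the shore $X$ is then assembled from the shore $X'$ via \cref{clm:cuts_equal}.

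Collecting terms, the cost is dominated by $\tO(\sqrt{mn}/\eps + n\sqrt{m \eps W})$ from the quantum primitives and $\tO(n/\eps^2 + \eps n^2 W + n^{3/2})$ from the classical ones. Balancing the two quantum terms gives the choice $\eps = n^{-1/3} W^{-1/3}$, which also balances the two classical terms at $\tO(n^{5/3} W^{2/3})$ and yields the claimed overall bound $\tO(\sqrt{m} n^{5/6} W^{1/3} + n^{5/3} W^{2/3})$, with queries coming only from the sparsifier and the Grover-search step. A final bookkeeping remark: the choice of $\eps$ satisfies the $\eps < 1/3$ hypothesis of \cref{alg:stmincut} whenever $n$ is at least a small constant, and for smaller $n$ the claimed running time already dominates that of any exact classical algorithm, so no separate case analysis is needed.
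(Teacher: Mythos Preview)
Your proposal is correct and follows essentially the same approach as the paper: the same choice $\eps = (nW)^{-1/3}$, the same cost analysis of Steps~\ref{step:sparsifier}--\ref{step:contract}, and the same use of \cref{clm:weight_upper} to bound the number of crossing edges in Step~\ref{step:learnedges}. The only minor difference is in how the crossing edges are found: the paper concatenates all adjacency lists into a single length-$2m$ string and performs one global application of \cref{thm:qsearch} with the known bound $t = O(\eps n^2 W)$, whereas you do a per-vertex search with a doubling trick and combine the costs via Cauchy--Schwarz; both yield the same $\tO(n\sqrt{m\eps W})$ bound, though the paper's single search is slightly cleaner since it avoids the doubling and the need to separately bound the vertices with $k_v=0$.
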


\begin{proof}
We follow the algorithm given in \cref{alg:stmincut} with the choice 
$\eps = (nW)^{-1/3}$.  

\paragraph{Line~\ref{step:max}} 
We can find the maximum weight $W$ of an edge with high probability after 
$\tO(\sqrt{m})$ queries to the graph using the
quantum maximum finding routine of D{\"{u}}rr and H{\o}yer \cite{DH96}.

\paragraph{Line~\ref{step:sparsifier}} 
We run quantum algorithm of Apers and de Wolf \cite{AdW19} given in 
\cref{thm:AdW} to find an $\eps$-cut sparsifier $H$ of $G$ with 
$\tO(n/\eps^2)$ edges.  As all edge weights of $G$ are integral and 
the largest weight of an edge is $W$, by the ``moreover'' part of this 
theorem the edge weights of $H$ are integral and the largest weight of an 
edge of $H$ is $O(\eps^2 nW)$.  This step succeeds with high probability 
and takes $\tO(\sqrt{mn}/\eps)$ queries.  

\paragraph{Line~\ref{step:maxflow}} For this step we can run a classical 
maximum $\st$ flow algorithm to compute a max flow $F$ in $H$.  As the 
graph is explicitly known, this step requires no queries. 

\paragraph{Line~\ref{step:contract}} After step \ref{step:maxflow}, the 
graph $H'$ is explicitly known.  Thus we can compute a $3\eps n$-strong 
partition of $H'$ classically with no queries.

\paragraph{Line~\ref{step:learnedges}} 
Let $\Pcal = \{A_1, \ldots, A_t\}$ be the $3\eps n$-strong partition of $H'$ 
computed in the previous step.  The graph $G'$ is formed by contracting 
vertices in the same set of this partition. By \cref{clm:weight_upper} the 
graph $G'$ has $O(\eps n^2 W)$ edges, assuming all previous 
steps have succeeded.  Consider the concatenation of all the lists in the 
adjacency list representation of $G$.  This gives a vector of dimension $2m$ 
and defines an ordering of edges of $G$ where every edge appears twice.  
Define a string $x \in \{0,1\}^{2m}$ using the same ordering where 
$x(e) = 1$ if the endpoints of $e$ are in distinct sets of the 
partition $\Pcal$ and $x(e) = 0$ otherwise.  A query to a bit of $x$ can 
be answered with a single query to the adjacency list of $G$.  By 
\cref{thm:qsearch} via Grover search in time $\tO(n \sqrt{m\eps W})$ 
with high probability we can determine if $x$ has at most $O(\eps n^2 W)$ 
ones, and if so learn the positions of all these ones.   If the number of 
ones in $x$ is larger than $O(\eps n^2 W)$ then we abort.  
With high probability we do not abort and once we learn the positions of all 
the ones in $x$ we can then classically learn the weight of the corresponding 
edges in $G$ with $O(\eps n^2 W)$ more queries.  Then we have explicitly 
learned the graph $G'$.

Once we have an explicit description of $G'$ we can run a classical max flow 
algorithm to compute a maximum $\ab$ flow $F'$ in $G'$,
where $a,b \in \Pcal$ are the sets of the partition containing $s,t$ 
respectively.  This takes no queries.  The flow of $F'$ gives 
$\lambda_{ab}(G')$.  To also compute the shore of a minimum $\ab$ cut of 
$G'$ we consider the residual graph of the flow $F'$ and compute the 
connected component containing $a$.  This can also be done classically 
with no queries.

\paragraph{Correctness and total running time}
Each step individually succeeds at least with high probability and so with high 
probability all steps will be correct.  We can thus invoke \cref{thm:correct} 
to conclude that the algorithm is correct with high probability.

Queries are only made on Lines \ref{step:max},\ref{step:sparsifier}, and 
\ref{step:learnedges}, and the number of queries made on these lines is 
$\tO(\sqrt{m}), \tO(\sqrt{mn}/\eps)$ and $\tO(n \sqrt{m \eps W})$.  The term
$\tO(\sqrt{m})$ is low order, and the choice $\eps = (nW)^{-1/3}$ makes the 
remaining two terms $\tO(n^{5/6} W^{1/3})$.  Thus the total number of 
queries is $\tO(\sqrt{m} n^{5/6} W^{1/3})$.
\end{proof}

We have the following result for the adjacency matrix model.
\begin{corollary}
\label{cor:adj}
Let $G$ be a graph with $n$ vertices and $m$ edges where all edge weights are 
integral and the largest weight of an edge is $W$.  Given vertices $s,t$ and 
oracle access to the adjacency matrix of $G$, there is a quantum algorithm 
that computes $\lambda_{st}(G)$ and the shore of a minimum $\st$ cut of $G$ 
with high probability after $\tO(n^{11/6}W^{1/3})$ queries.
\end{corollary}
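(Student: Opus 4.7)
The plan is to run exactly Algorithm~\ref{alg:stmincut} with the same template as in the proof of Theorem~\ref{thm:stmincut}, swapping out only the two quantum sub-routines for their adjacency-matrix counterparts. Everything classical (computing a max flow in $H$, building a $3\eps n$-strong partition of $H'$, running max flow on the explicitly reconstructed $G'$) is unchanged, and so are Claims~\ref{clm:nocut}, \ref{clm:cuts_equal} and \ref{clm:weight_upper}, all of which depend only on the properties of $H$ and on access to $G$, not on the query model.

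For Step~\ref{step:sparsifier} I would invoke the adjacency-matrix variant of Theorem~\ref{thm:AdW}, which produces an $\eps$-cut sparsifier $H$ with $\tO(n/\eps^2)$ edges in time $\tO(n^{3/2}/\eps)$ instead of $\tO(\sqrt{mn}/\eps)$; the \emph{moreover} part still gives integral weights bounded by $O(\eps^2 nW)$, which is what Claim~\ref{clm:weight_upper} requires. For Step~\ref{step:learnedges} I would replace the Grover search over the length-$2m$ concatenated adjacency lists by a Grover search over the $\binom{n}{2}\in O(n^2)$ potential edges indexed by the adjacency matrix: bit $x_e$ is set to $1$ iff $e\in E(G)$ and the two endpoints of $e$ lie in distinct parts of the partition $\mathcal{P}$, and each such bit is computable by a single adjacency-matrix query. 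By Claim~\ref{clm:weight_upper} the number of marked bits is $O(\eps n^2 W)$ with high probability, so Theorem~\ref{thm:qsearch} identifies them all in time $\tO\!\left(\sqrt{\eps n^2 W \cdot n^2}\right)=\tO(n^2\sqrt{\eps W})$. The learnt incidences plus one adjacency-matrix query per edge recover the explicit graph $G'$, after which the max-flow of \cite{BLLSSSW21} computes $\lambda_{ab}(G')$ and the shore.

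To balance the two quantum costs $n^{3/2}/\eps$ and $n^2\sqrt{\eps W}$ I would set $\eps=n^{-1/3}W^{-1/3}$, which equalises them at $n^{11/6}W^{1/3}$; with the same choice the classical contributions $n/\eps^2$ and $\eps n^2 W$ both become $n^{5/3}W^{2/3}$, which is absorbed into the stated $\tO(n^{11/6}W^{1/3})$ bound (in particular in the regime $W\le n^{1/2}$ the quantum terms strictly dominate). As in the proof of Theorem~\ref{thm:stmincut} we first need to know $W$ to fix $\eps$; this is handled by D\"urr--H\o yer minimum finding, which in the adjacency-matrix model takes time $\tO(n)$ and is negligible.

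The only real verification is that the Grover step still succeeds with high probability despite $m$ being unknown: we state the threshold $t$ in the application of Theorem~\ref{thm:qsearch} purely in terms of the graph-independent bound $O(\eps n^2 W)$ from Claim~\ref{clm:weight_upper}, and abort if the number of marked entries exceeds it; correctness of the full pipeline then follows from Theorem~\ref{thm:correct} via a union bound over the three high-probability events (sparsifier construction, Grover search, both max-flow computations), exactly as in Theorem~\ref{thm:stmincut}.
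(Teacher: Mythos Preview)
Your argument matches the paper's almost word for word---same choice of $\eps=(nW)^{-1/3}$, same adjacency-matrix sparsifier, same Grover search over the $O(n^2)$ matrix entries---but there is one genuine gap.  You assert that the classical cost $n^{5/3}W^{2/3}$ is ``absorbed into the stated $\tO(n^{11/6}W^{1/3})$ bound,'' and then parenthetically note that this holds when $W\le n^{1/2}$.  That parenthetical is not an illustration; it is the \emph{only} regime in which the absorption is valid.  For $W>n^{1/2}$ one has $n^{5/3}W^{2/3}>n^{11/6}W^{1/3}$, so your running-time analysis fails there and the corollary is left unproved in that range.

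The paper closes this gap with a one-line case split before running \cref{alg:stmincut} at all: if $W\ge\sqrt{n}$, simply read the entire adjacency matrix in $O(n^2)$ time and run the classical max-flow algorithm of \cref{thm:brand}, costing $\tO(n^2\log W)$; since $n^{1/6}\le W^{1/3}$ in this regime, this is $\tO(n^{11/6}W^{1/3})$ as required.  Only when $W<\sqrt{n}$ does the paper invoke the sparsify--contract--search pipeline you describe.  Adding this case distinction (or an equivalent observation) makes your proof complete.
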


\begin{proof}
The result is trivial if $W \ge \sqrt{n}$, so we just need to consider the 
case $W < \sqrt{n}$.  We follow the same algorithm \cref{alg:stmincut} 
again choosing $\eps = (nW)^{-1/3}$.

Line \ref{step:max} can be done with $\tO(n)$ queries using the quantum 
maximum finding routine of D{\"{u}}rr and H{\o}yer \cite{DH96}.
In Line~\ref{step:sparsifier}, we run the adjacency matrix version of the 
sparsifier algorithm from \cite{AdW19} (quoted in \cref{thm:AdW}), which 
takes $\tO(n^{3/2}/\eps) = \tO(n^{11/6} W^{1/3})$ queries.
In Line~\ref{step:learnedges} we find the $O(\eps n^2 W)$ edges of $G'$ by 
applying \cref{thm:qsearch} over the $O(n^2)$ entries of the adjacency matrix, 
rather than the $O(m)$ entries of the adjacency list.
This takes $\tO(n^2 \sqrt{\eps W}) = \tO(n^{11/6} W^{1/3})$ queries.

The remaining steps are performed classically on graphs explicitly 
constructed by the algorithm and require no queries.
\end{proof}

\section{Randomized lower bound}
In this section we show that a randomized algorithm that computes $\lambda_{st}(G)$ with success probability at least $9/10$ on simple graphs $G$ with $m$ edges must make 
$\Omega(m)$ queries to the adjacency list of $G$ in the worst case.  This holds true even for just determining if $s$ and $t$ are connected in $G$, which we call the USTCON problem.
\begin{definition}[USTCON]
In the USTCON problem one is given adjacency list access to an undirected graph $G = (V,E)$ and two distinguished vertices $s,t \in V$.  The problem is to determine 
if $s$ and $t$ are connected in $G$.  
\end{definition}

To show a lower bound on USTCON we will use the classical adversary method for 
randomized query complexity \cite{Aaronson06, LM08, AKPVZ21}, a randomized 
analog of the quantum adversary method \cite{Ambainis02}.  We will just need a 
simple unweighted version of the method, adapted from 
\cite[Theorem 4.3]{Aaronson06}.

\begin{lemma}[{cf.\ \cite[Theorem 4.3]{Aaronson06}}]
\label{lem:adv_simple}
For a finite set $\Sigma$ and $S \subseteq \Sigma^n$, let $f : S \rightarrow \{0,1\}$.  Let $X \subseteq f^{-1}(0)$ and $Y \subseteq f^{-1}(1)$.  Let $R \subseteq X \times Y$ be such that 
for every $x \in X$ there are at least $t$ different $y \in Y$ such that $(x,y) \in R$ and for every $x \in X$ and $k \in [n]$ there are at most $\ell$ different $y \in Y$ such that $(x,y) \in R$ 
and $x_k \ne y_k$.  Then any randomized algorithm that computes $f$ with success probability $9/10$ must make $\Omega(t/\ell)$ queries.
\end{lemma}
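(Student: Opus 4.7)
The plan is to invoke Yao's minimax principle and prove the bound against a carefully chosen hard input distribution. For each $x\in X$ I would fix, once and for all, a subset $T_x\subseteq R_x:=\{y:(x,y)\in R\}$ of size exactly $t$ (possible since $|R_x|\ge t$) and sample from a distribution $\mu$ as follows: with probability $1/2$ output a uniformly random $x\in X$, and with probability $1/2$ first sample $x$ uniformly from $X$ and then output $y$ uniformly from $T_x$. Since any randomized algorithm with success probability $9/10$ on every input succeeds with probability $\ge 9/10$ under $\mu$, by Yao it suffices to lower bound the worst-case query count of any deterministic algorithm $A$ that succeeds with probability $\ge 9/10$ against $\mu$.

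The core of the argument is the standard adversary observation: a deterministic $A$ produces the same transcript on two inputs $x,y$ provided every position $k$ it queries on $x$ satisfies $x_k=y_k$, and in particular outputs the same answer. Consequently, whenever $A$ correctly outputs $0$ on $x\in X$ and $1$ on some $y\in T_x$, it must query at least one position $k$ with $x_k\ne y_k$; I call such a $y$ \emph{distinguished} by the execution of $A$ on $x$. The hypothesis supplies the crucial per-query cap: for each position $k$ that $A$ queries on $x$, at most $\ell$ values of $y\in R_x$ satisfy $x_k\ne y_k$, so the number of distinguished $y\in T_x$ is at most $\ell\cdot Q(x)$, where $Q(x)$ is $A$'s query count on input $x$.

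The rest is a double count. Let $q$ denote the fraction of $x\in X$ on which $A$ errs and $q'$ the fraction of sampled pairs $(x,y)$ on which $A$ errs on $y$; the even-mixture construction of $\mu$ together with the $9/10$ success guarantee yields $q+q'\le 1/5$. The number of pairs $(x,y)$ with $x\in X$, $y\in T_x$, $A(x)=0$, and $A(y)=1$ is then at least $(1-q-q')\,t|X|\ge \tfrac{4}{5}\,t|X|$, and each such pair is a distinguishing event. Summing $\ell\cdot Q(x)$ over $x\in X$ therefore gives $\sum_{x\in X}Q(x)\ge \tfrac{4t|X|}{5\ell}$, so the average of $Q(x)$ over $X$ is $\Omega(t/\ell)$ and in particular $\max_x Q(x)=\Omega(t/\ell)$.

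I do not anticipate any substantial obstacle: the argument is a one-sided specialization of the standard relation-based classical adversary. The only mild care required is arranging $\mu$ as an even mixture of the $X$- and $y$-marginals so that the two error fractions $q,q'$ split cleanly and sum to at most $1/5$; the rest is routine bookkeeping, and fixing the subsets $T_x$ of size exactly $t$ in advance avoids any non-uniformity headaches when $|R_x|$ varies with $x$.
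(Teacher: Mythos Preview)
The paper does not actually prove this lemma; it is stated with a citation to \cite[Theorem~4.3]{Aaronson06} and used as a black box in the proof of \cref{thm:rand_lower}. Your argument is correct and is essentially the standard classical-adversary proof: Yao's principle to pass to a deterministic algorithm against a hard distribution, the indistinguishability observation that $A(x)\ne A(y)$ forces a query at some coordinate where $x$ and $y$ differ, and the per-coordinate cap of $\ell$ to translate a count of ``good'' pairs into a lower bound on $\sum_{x}Q(x)$. The choice to trim each $R_x$ to a subset $T_x$ of size exactly $t$ and to take an even mixture of the $X$- and $Y$-marginals makes the double count clean and is a nice touch. There is nothing to compare against in the paper itself.
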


\begin{theorem}
\label{thm:rand_lower}
A randomized algorithm that solves USTCON with success probability at least $9/10$ on graphs with $m$ edges edges requires $\Omega(m)$ queries.
\end{theorem}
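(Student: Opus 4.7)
The plan is to apply the classical adversary method, Lemma~\ref{lem:adv_simple}, with a single NO instance and a rich family of YES instances produced by a degree-preserving local modification. I would take $n = 2k+2$ with $k = \Theta(\sqrt{m})$, partition the vertices as $V_s = \{s, s_1, \ldots, s_k\}$ and $V_t = \{t, t_1, \ldots, t_k\}$, and let $G_0$ be the disjoint union of the two cliques on $V_s$ and on $V_t$, so that $|E(G_0)| = k(k+1) = \Theta(m)$ and $s, t$ lie in different components. Set $X = \{G_0\}$.

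For the YES family $Y$, for every choice of $\{a, b\} \subseteq V_s \setminus \{s\}$, $\{c, d\} \subseteq V_t \setminus \{t\}$, and bijection $\pi: \{a, b\} \to \{c, d\}$, I would form $y$ from $G_0$ by deleting $\{a, b\}$ and $\{c, d\}$ and inserting the cross-edges $\{a, \pi(a)\}$ and $\{b, \pi(b)\}$. Removing a single edge from $K_{k+1}$ leaves it connected (for $k \ge 2$), so $s$ still reaches $a$ inside $G_0[V_s]$ and $t$ still reaches $\pi(a)$ inside $G_0[V_t]$, giving an $\st$-path and hence a YES instance. Crucially, this ``double swap'' preserves every vertex's degree, so $G_0$ and every $y \in Y$ share identical adjacency list lengths $N = 2k(k+1)$; encoding each graph as a string in $V^N$ (slot $(v, i)$ holds $v$'s $i$th neighbor) with $R = \{G_0\} \times Y$ puts us in the setting of Lemma~\ref{lem:adv_simple}.

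The counting is then a small combinatorial exercise. We have $|Y| = 2\binom{k}{2}^2 = \Theta(k^4)$, so $t := |Y| = \Theta(k^4)$. For a fixed position $(v, i)$, let $b^*$ be the original $i$th neighbor of $v$ in $G_0$; a $y \in Y$ that changes this position must place $v$ in the removed edge on its side, and that removed edge must be $\{v, b^*\}$, forcing $\{a, b\} = \{v, b^*\}$. Only $\{c, d\}$ and $\pi$ remain free, contributing at most $2\binom{k}{2} = \Theta(k^2)$ many such $y$; positions with $v \in \{s, t\}$ are untouched by any swap. Hence $\ell = \Theta(k^2)$ and Lemma~\ref{lem:adv_simple} yields a randomized lower bound $\Omega(t/\ell) = \Omega(k^2) = \Omega(m)$.

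The main obstacle is calibrating the modification that produces $Y$: it must perturb as few adjacency list slots as possible (to keep $\ell$ small), admit enough YES witnesses (to make $t$ large), and preserve vertex degrees so that $G_0$ and all of $Y$ live in a common $\Sigma^N$ input space. A single-edge replacement fails the degree-preservation condition; overly symmetric rewirings perturb too many slots and blow up $\ell$. The two-edge ``$V_s$-swap plus $V_t$-swap with a cross matching'' is tuned so that each $y$ perturbs exactly four slots, one per affected vertex, which is precisely what makes $t/\ell$ scale as $\Theta(m)$.
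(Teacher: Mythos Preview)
Your proposal is correct and follows essentially the same approach as the paper: a single disconnected NO instance consisting of two cliques, a family of YES instances obtained by a degree-preserving double edge swap that replaces one intra-clique edge on each side by a cross matching, and then the classical adversary bound of \cref{lem:adv_simple} with $t = 2\binom{k}{2}^2$ and $\ell \le 2\binom{k}{2}$. The only cosmetic difference is that the paper places $s$ and $t$ outside the cliques (attaching $s$ to all of one side and $t$ to all of the other) rather than inside them, and it spells out the vertex ordering used for the adjacency lists; your implicit ``replace in place'' encoding is exactly what makes each $y$ differ from $G_0$ in only four slots, and you should state that convention explicitly.
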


\begin{proof}
First we show the lower bound in the dense case where $m = \Omega(n^2)$ using \cref{lem:adv_simple}.   We construct sets of negative instances $X$ and positive instances $Y$.  
Suppose that $n$ is even.  Excluding $s$ and $t$ we partition the remaining $n-2$ vertices into two sets 
$A$ and $B$ each of size $(n-2)/2$.  In all instances, $s$ will be connected to all vertices in $A$ and $t$ will be connected to all vertices in $B$.  $X$ will consist of a single negative instance $G$ 
where we additionally put a clique on the vertices in $A$ and a clique on the vertices in $B$ and no further edges.  Thus $s$ and $t$ will not be connected in $G$.  Note that apart from $s$ and $t$,
every vertex has degree $(n-2)/2$ in $G$.

For $Y$ we create a family of $2 \binom{n-2}{2}^2$ positive instances.  A positive instance will be labeled by $a=\{a_1, a_2\} \in A^{(2)}, b=\{b_1, b_2\} \in B^{(2)}$ and a bit $c \in \{0,1\}$.  
Associated to $a,b,c$ we construct a graph $G_{a,b,c}$ as follows.  Take the graph $G$ and remove the edges $a$ and $b$.  If $c=0$ then add edges 
$\{\min\{a_1,a_2\}, \min\{b_1,b_2\}\}$ and $\{\max\{a_1,a_2\}, \max\{b_1,b_2\}\}$ to form $G_{a,b,c}$.  Otherwise if $c=1$ then add edges 
$\{\min\{a_1,a_2\}, \max\{b_1,b_2\}\}$ and $\{\max\{a_1,a_2\}, \min\{b_1,b_2\}\}$ to form $G_{a,b,c}$.  In both cases all vertices in $A \cup B$ have degree $(n-2)/2$ in $G_{a,bc}$ 
and there will be two edges connecting $A$ and $B$ so $\lambda_{st}(G_{a,b,c}) = 2$.  This completes the description of the instances.  

The degree sequences of all instances are the same, thus we may assume this is known to the algorithm and the algorithm does not need to make any degree queries.  We thus focus 
on queries to the name of the $\ith$ neighbor of a vertex $v$.  For this it is important to specify the ordering of vertices given in the adjacency lists.  For all vertices we will 
use the same ordering in their list.  Let $k = (n-2)/2$ and label the vertices of $A$ as $a_1, a_2, \ldots, a_k$ and the vertices of $B$ as $b_1, b_2, \ldots, b_k$.  We use the ordering 
$s < t < a_1 < b_1 < \cdots < a_k < b_k$.  

We are now ready to show the lower bound using \cref{lem:adv_simple}.  We put $G$ in relation with all $2 \binom{n-2}{2}^2$ of the $G_{a,b,c}$.  
Now consider how many of the $G_{a,b,c}$ differ from $G$ in a specific location of the adjacency list, specifically consider the $\ith$ neighbor of 
a vertex $a_j \in A$.  In $G$ the $\ith$ neighbor of $a_j$ is
\begin{enumerate}
	\item $s$ if $i=1$.
	\item $a_{i-1}$ if $i \le j$.
	\item $a_{i}$ if $i > j$.
\end{enumerate}
The $\ith$ neighbor of $a_j$ in $G_{a,b,c}$ will be the same unless 
\begin{enumerate}
	\item $i \le j$ and $a=\{a_{i-1},a_j\}$.
	\item $i > j$ and $a = \{a_j, a_i\}$.
\end{enumerate} 
In either case, the number of $a,b,c$ for which $G_{a,b,c}$ differs from $G$ on the name of the $\ith$ neighbor of $a_j$ is $2 \binom{n-2}{2}$, as one only has free choice of 
the edge $b$ and the bit $c$.  A similar argument holds when considering the $\ith$ neighbor of a vertex $b_j \in B$.  Thus for any position of the adjacency list the number of 
$G_{a,b,c}$ that differ from $G$ is at most $2 \binom{n-2}{2}$ and by \cref{lem:adv_simple} we obtain a lower bound of $\binom{n-2}{2} = \Omega(m)$, proving the theorem 
in the dense case.

Finally, let us treat the general case of graphs with at most $m$ edges.  We choose the largest integer $p$ such that $2(p + \binom{p}{2}) \le m$ and then take disjoint sets of vertices 
$A$ and $B$ both of size $p$.  We then repeat the construction from the dense case on $s,t,A,B$.  The lower bound will be $\binom{p}{2} = \Omega(m)$ as desired.
\end{proof}

\section*{Acknowledgements}
We would like to thank an anonymous reviewer from ICALP 2022 for pointing out
errors in a previous version of this paper.   
Troy Lee is supported in part by the Australian Research Council Grant No: DP200100950.
Simon Apers is partially supported by French projects EPIQ (ANR-22-PETQ-0007), 
QUDATA (ANR-18-CE47-0010) and QUOPS (ANR22-CE47-0003-01), and EU project QOPT 
(QuantERA ERA-NET Cofund 2022-25)

\newcommand{\etalchar}[1]{$^{#1}$}

\end{document}